\documentclass[11pt,reqno,letterpaper]{amsart}

\usepackage{amssymb}
\usepackage{amsmath}
\usepackage{amsthm}
\usepackage{mathrsfs}
\usepackage{bbm}
\usepackage{doi}
\usepackage{tikz}
\usetikzlibrary{arrows,decorations.pathreplacing}

\usepackage{hyperref}
\hypersetup{pdfstartview={FitH}, hidelinks}
\usepackage{bookmark}

\numberwithin{equation}{section}

\newtheorem{theorem}{Theorem}
\newtheorem{lemma}[theorem]{Lemma}

\renewcommand{\leq}{\leqslant}
\renewcommand{\geq}{\geqslant}

\newcommand{\R}{\mathbb{R}}

\begin{document}

\title{A converse to generalized Runcorn's theorem}

\author[V. Kova\v{c}]{Vjekoslav Kova\v{c}}
\address[V. K.]{University of Zagreb Faculty of Science, Department of Mathematics, Bijeni\v{c}ka cesta 30, 10000 Zagreb, Croatia}
\email{vjekovac@math.hr}

\author[I. Smoli\'{c}]{Ivica Smoli\'{c}}
\address[I. S.]{University of Zagreb Faculty of Science, Department of Physics, Bijeni\v{c}ka cesta 32, 10000 Zagreb, Croatia}
\email{ismolic@phy.hr}

\subjclass[2020]{Primary
86A22, 
86A25; 
Secondary
33C55, 
43A90} 

\keywords{Geomagnetism, Inverse problem, Harmonic function, Spherical harmonic}

\begin{abstract}
We study an inverse problem for generalized Runcorn's theorem motivated by an apparent paradox of lunar magnetism. In mathematical terms, we characterize square-integrable complex functions $f$ such that
\[ \int_{\{x\in\mathbb{R}^n:a\leq|x|\leq b\}} f(x) \nabla u(x)\cdot\nabla v(x)\,\textup{d}\textup{V}(x) = 0 \]
for every complex harmonic function $u$ in the inner ball $|x|<r_+$ and every complex harmonic function $v$ in the exterior region $|x|>r_-$ that vanishes at infinity. The solution space depends on whether the radii $a$ and $b$ such that $r_-<a<b<r_+$ are regarded as varying or fixed. The solutions are described through the expansion of $f$ into spherical harmonics, and explicit representations are provided.
\end{abstract}

\maketitle


\section{Introduction}

\subsection{Background and motivation}
One of the objectives of NASA's Apollo missions in the 1960s and 1970s was to gain a better understanding of lunar magnetism. As earlier predictions had suggested, various orbiting probes confirmed the lack of a significant global magnetic field, consistent with the absence of an active dynamo in the Moon's core. However, analysis of the lunar rocks brought back by the missions revealed that they possess remanent magnetization. Keith S. Runcorn \cite{Run75a,Run75b} proposed an explanation for this discrepancy, suggesting that the Moon once had an active core dynamo whose magnetic field magnetized the lunar crust. Assuming the crust can be modeled as a spherical shell (true deviations from spherical symmetry are on the order of $10^{-3}$), that the magnetization is linear (with isotropic and possibly radially dependent susceptibility), and that the lunar dynamo is no longer active, the magnetic field in the external region would be exactly zero, despite the presence of a magnetized crust.
Alternative derivations and later geophysical discussions of Runcorn's theorem can be found, for example, in \cite{B86,J07,GIMW11}, while deviations from the perfectly spherical setting are analyzed in \cite{JWL99,LJ00}.

Runcorn's result was generalized by one of the present authors \cite{S25} as follows. Fix an integer $n\geq2$. A function $f$ defined on a domain invariant under rotations is \emph{radial} if $f(x)=g(|x|)$ for some one-dimensional function $g$. Here $|x|$ denotes the Euclidean norm. For arbitrary numbers
\[ 0 < r_- < a < b < r_+ < \infty \]
denote
\begin{subequations}
\begin{align}
\mathscr{U} & := \{x\in\R^n : |x|<r_+\}, \label{eq:defofU} \\
\mathscr{V} & := \{x\in\R^n : |x|>r_-\}, \label{eq:defofV} \\
\mathscr{A}_{a,b} & := \{x\in\R^n : a\leq |x|\leq b\}. \label{eq:defofA}
\end{align}
\end{subequations}
Then for every radial $\textup{C}^1$ function $f$ on $\mathscr{U}\cap\mathscr{V}$, every harmonic function $u$ on $\mathscr{U}$, and every harmonic function $v$ on $\mathscr{V}$ vanishing at infinity (that is,
$v(x)\to0$ as $|x|\to\infty$), one has
\begin{equation}\label{eq:runcorn}
\int_{\mathscr{A}_{a,b}} f(x) \nabla u(x) \cdot \nabla v(x) \,\textup{d}\textup{V}(x) = 0;
\end{equation}
see Figure~\ref{fig:annulus-geometry}.

\begin{figure}[ht]
\centering
\begin{tikzpicture}[>=stealth', scale=1.1]
 \def\rminus{0.8}
 \def\ra{1.45}
 \def\rb{2.15}
 \def\rplus{2.85}
 \fill[gray!18, even odd rule] (0,0) circle (\rb) (0,0) circle (\ra);
 \draw[dashed] (0,0) circle (\rminus);
 \draw[thick] (0,0) circle (\ra);
 \draw[thick] (0,0) circle (\rb);
 \draw[dashed] (0,0) circle (\rplus);
 \draw[->] (0,0) -- (3.7,0) node[below] {$r$};
 \draw[fill=white] (0,0) circle (0.03);
 \node at (0.22+\rminus,0.1) {$r_{\!{\scriptscriptstyle -}}$};
 \node at (0.14+\ra,0.13) {$a$};
 \node at (0.14+\rb,0.16) {$b$};
 \node at (0.22+\rplus,0.13) {$r_{\!{\scriptscriptstyle +}}$};
 \node at (0,2.45) {$\mathscr{A}_{a,b}$};
 \draw[decorate, decoration={brace, mirror, amplitude=8pt}] (-\rplus,-2.5) -- (0,-2.5);
 \node[align=center] at (-1.5,-3.2) {$u$ harmonic\\in $|x|<r_+$};
 \draw[decorate, decoration={brace, mirror, amplitude=8pt}] (\rminus,-2.5) -- (4,-2.5);
 \draw[white, fill=white] (3.8,-2.7) rectangle (4.2,-2.3);
 \node[align=center] at (2.35,-3.2) {$v$ harmonic\\in $|x|>r_-$};
\end{tikzpicture}
\caption{Cross-section of the annulus.}
\label{fig:annulus-geometry}
\end{figure}

In \cite{S25} the functions $f$, $u$, and $v$ are real-valued, but the same identity remains valid in the complex-valued setting once we interpret
\[ \nabla u\cdot \nabla v := \sum_{m=1}^n \partial_m u\,\partial_m v \]
bilinearly, that is, without complex conjugation.
If the gradients are identified with column vectors and $(\cdot)^\textup{T}$ stands for the transposition, then the above expression can also be written simply as $(\nabla u)^\textup{T} \nabla v$.
Throughout the paper, harmonic functions may be complex-valued; equivalently, one may test the identities on real-valued harmonic functions and then extend them by complex bilinearity.
Moreover, by approximating an arbitrary radial $f\in\textup{L}^1(\mathscr{A}_{a,b})$ with smooth radial functions, one extends \eqref{eq:runcorn} from $\textup{C}^1$ data to all integrable radial $f$.
In what follows it is more convenient to work either with $\textup{C}(\mathscr{A}_{a,b})$, the space of complex continuous functions on $\mathscr{A}_{a,b}$, or with $\textup{L}^2(\mathscr{A}_{a,b})$, the space of complex square-integrable functions on the same annulus.

It is natural to ask for a converse: which square-integrable functions $f$ satisfy \eqref{eq:runcorn} for all admissible harmonic pairs $u,v$?
In three dimensions Arkani-Hamed and Dyment \cite{AHD96} suggested that such an $f$ should necessarily be radial, even when $v$ ranges only over a more restricted class of exterior harmonic functions.
The discussion in \cite{AHD96} is informal and does not settle the question.

Our first main result, Theorem~\ref{thm:varying} below, shows that the expected characterization does hold if \eqref{eq:runcorn} is required on every sub-annulus $\mathscr{A}_{a,b}$ with radii $a$ and $b$ in a prescribed interval $(r_-,r_+)$.
The second main result, Theorem~\ref{thm:fixed}, gives a complete description on a fixed annulus $\mathscr{A}_{a,b}$ and shows that the converse then fails strongly: after decomposition into spherical harmonics, each tangential term is constrained only by finitely many radial moment conditions.
In particular, the solution space is an infinite orthogonal sum of infinite-dimensional subspaces, which we describe explicitly below.
This complements the broader non-uniqueness phenomena that appear in magnetization inversion; compare, for instance, \cite{MH03,GIMW11}.
Further mathematically oriented work on uniqueness, non-uniqueness, and decomposition in inverse magnetization problems includes thin-plate results \cite{BHLSW13}, uniqueness for induced spherical magnetization \cite{Ger16}, source separation problems for geomagnetic potentials \cite{BG17}, decompositions on Lipschitz surfaces \cite{BGK21}, uniqueness on special volumetric models \cite{BGKM21}, and the $\textup{L}^p$ theory of silent sources \cite{BLN25}. Related Hardy-space and incomplete data formulations on the sphere and shell appear in \cite{ABLP10}, while recent work on internal--external field separation under partial spherical data is given in \cite{HGR26}.

\subsection{Statements of the main results}
In all that follows we fix an ambient dimension, which is an integer $n\geq 2$, and two positive reals $r_-<r_+$.
Recall the definitions \eqref{eq:defofU} and \eqref{eq:defofV} of the open sets $\mathscr{U}$ and $\mathscr{V}$, and the definition \eqref{eq:defofA} of the annulus $\mathscr{A}_{a,b}$ with inner radius $a$ and outer radius $b$.

The characterization for varying $a$ and $b$ is comparatively simple.

\begin{theorem}\label{thm:varying}
The following are equivalent for a complex-valued function $f\in\textup{C}(\mathscr{U}\cap\mathscr{V})$.
\begin{enumerate}
\item \label{thmit1} 
Equality \eqref{eq:runcorn} holds for every harmonic $u$ in $\mathscr{U}$, every harmonic $v$ in $\mathscr{V}$ that vanishes at infinity, and every pair of real numbers $a<b$ from the interval $(r_-,r_+)$.
\item \label{thmit2} 
Equality
\[ \int_{\{x\in\R^n : |x|=r\}} f(x) \, \nabla u(x) \cdot \nabla v(x) \,\textup{d}\textup{S}(x) = 0 \]
holds for every harmonic $u$ in $\mathscr{U}$, every harmonic $v$ in $\mathscr{V}$ that vanishes at infinity, and every $r\in(r_-,r_+)$.
\item \label{thmit3}
If $n\geq3$, then $f$ is radial. If $n=2$, then $f$ is of the form
\[ f(r\cos\varphi,r\sin\varphi) = g_{-1}(r) e^{-\mathbbm{i}\varphi} + g_{0}(r) + g_{1}(r) e^{\mathbbm{i}\varphi} \]
for some continuous functions $g_{-1},g_0,g_1\colon(r_-,r_+)\to\mathbb{C}$.
\end{enumerate}
\end{theorem}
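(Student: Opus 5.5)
The plan is to reduce everything to spherical-harmonic computations on a single sphere $|x|=r$.

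\emph{Equivalence of (1) and (2).} Write the integral in \eqref{eq:runcorn} in polar coordinates as $\int_a^b F(r)\,\textup{d}r$. Here $F(r)$ is the sphere integral appearing in (2), and it is continuous in $r$ because $f$ is continuous and $u,v$ are smooth on $\mathscr{U}\cap\mathscr{V}$. If (2) holds, then $F\equiv0$ and (1) follows. Conversely, if $\int_a^b F=0$ for all $a<b$ in $(r_-,r_+)$, then differentiating in $b$ gives $F\equiv0$.

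\emph{Reduction of (2) to coefficient conditions.} By density of polynomial and multipole expansions (locally uniformly with derivatives on the sphere $|x|=r$), it suffices to test (2) on the following functions:
\begin{itemize}
\item $u=|x|^kY_k(x/|x|)$ with $k\geq0$;
\item $v=|x|^{-(l+n-2)}Y_l(x/|x|)$ with $l\geq0$ if $n\geq3$ and $l\geq1$ if $n=2$, since $\log|x|$ does not vanish at infinity.
\end{itemize}
On $|x|=r$, split $\nabla u\cdot\nabla v$ into its radial and tangential parts. This gives, up to a positive power of $r$,
\[ -k(l+n-2)\,Y_kY_l+\nabla_S Y_k\cdot\nabla_S Y_l . \]
Put $\lambda_j=j(j+n-2)$ and use $\Delta_S Y_k=-\lambda_kY_k$. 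The identity
\[ \nabla_S Y_k\cdot\nabla_S Y_l=\tfrac12\Delta_S(Y_kY_l)+\tfrac12(\lambda_k+\lambda_l)Y_kY_l \]
turns the tangential term into a Laplacian plus a multiple of $Y_kY_l$. Collecting terms, the integrand becomes $\tfrac12\Delta_S(Y_kY_l)+c_{k,l}Y_kY_l$, where
\[ 2c_{k,l}=(k-l)^2-(n-2)(k-l)=d(d-n+2),\qquad d=k-l. \]
Now expand $g=f(r\,\cdot)$ on the sphere into spherical harmonics $g=\sum_j g_j$ and move $\Delta_S$ onto $g$ by integration by parts. Condition (2) becomes
\[ \bigl(d(d-n+2)-\lambda_j\bigr)\int_{S^{n-1}} g_j\,Y_kY_l\,\textup{d}\sigma=0\qquad\text{for all admissible }k,l,j. \]

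\emph{Conclusion for $n\geq3$.} For $j\geq1$, take $k=j$ and $l=0$. Then $d=j$, and $j(j-n+2)\neq j(j+n-2)$ because $n\geq3$. Since $Y_0$ is constant and $Y_j$ ranges over all degree-$j$ harmonics, we get $g_j=0$ for every $j\geq1$, so $f$ is radial.

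For the converse direction, the $j=0$ term only sees $\int Y_kY_l\,\textup{d}\sigma$. This vanishes unless $k=l$, and in that case $d=0$ and $\lambda_0=0$, so the condition is satisfied automatically. Alternatively, this direction is exactly the result of \cite{S25}.

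\emph{Conclusion for $n=2$.} Use $Y_k=e^{\pm\mathbbm{i}k\varphi}$, so products have frequencies $m=\pm k\pm l$, and the condition reads $(d^2-m^2)\,\widehat g(m)=0$ whenever frequency $m$ occurs.
\begin{itemize}
\item For $|m|\geq2$, write $|m|=k+l$ with $k,l\geq1$ and take same-sign exponentials. Then $|d|=|k-l|<k+l=|m|$, which forces $\widehat g(m)=0$.
\item For $m=0$ and $m=\pm1$, every realization has $|d|=|m|$. For $m=0$ one needs $k=l$, and for $m=\pm1$ one needs $\{k,l\}$ to differ by one with opposite signs, or $k=0,l=1$; in each case $d^2=m^2$. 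Hence the condition holds automatically for these frequencies.
\end{itemize}
This yields the three-term form in (3), and it also gives the converse direction.

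\emph{Main obstacle.} The delicate step is the bookkeeping of the coefficient $d(d-n+2)$ relative to $\lambda_j$. Two points need care: which frequencies or degrees $j$ are genuinely reachable by products $Y_kY_l$ under the restriction $l\geq1$ when $n=2$, and the density argument that reduces testing to these monomial harmonics. The density argument needs uniform convergence of gradients on the sphere, which holds because $r$ lies strictly inside $(r_-,r_+)$.
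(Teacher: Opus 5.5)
Your argument is correct, but it takes a different route from the paper for everything beyond (\ref{thmit1})$\Leftrightarrow$(\ref{thmit2}), which you handle exactly as the paper does.

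The paper never works on a single sphere. It applies the fixed-annulus classification, Theorem~\ref{thm:fixed}, on every sub-annulus and keeps only one monomial of $\mathfrak{R}_k$: $r^{k-n}$ for $n\geq3$ and $r^{k-4}$ for $n=2$. From $\int_a^b g_{k,j}(r)\,r^{k-1}\,\textup{d}r=0$ (resp.\ with $r^{k-3}$) for all $a<b$ it concludes $g_{k,j}\equiv0$. The converse is the sufficiency half of Theorem~\ref{thm:fixed}, which rests on the $\zeta$-spanning Lemma~\ref{lem:span} and the product decomposition of Lemma~\ref{lem:decompose}.

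You instead prove (\ref{thmit2})$\Leftrightarrow$(\ref{thmit3}) directly. This is the direct proof the paper mentions after its argument and omits.
\begin{itemize}
\item For $n\geq3$, your $l=0$ pair is exactly the $q=0$ case of Lemma~\ref{lem:specialpair}: $\nabla u\cdot\nabla|x|^{2-n}=-k(n-2)|x|^{-n}u$ is a single degree-$k$ harmonic for \emph{every} $Y_k\in\mathscr{H}_k$. So the $\zeta$-family is unnecessary.
\item For $n=2$, products of exponentials are again single exponentials, so no mixing of degrees occurs. Your bookkeeping with $d^2-m^2$ correctly leaves exactly the frequencies $|m|\leq1$ unconstrained.
\end{itemize}
Your route gives a short self-contained proof that is even local in $r$: condition (\ref{thmit2}) at a single radius already forces $f(r\,\cdot)$ to be constant (resp.\ to have frequencies in $\{-1,0,1\}$). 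The paper's route costs nothing once Theorem~\ref{thm:fixed} is available, and it presents Theorem~\ref{thm:varying} as the varying-annulus consequence of the finitely many moment conditions.

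One display needs correcting. From a single pair $(k,l)$ you do not obtain $\bigl(d(d-n+2)-\lambda_j\bigr)\int g_jY_kY_l\,\textup{d}\sigma=0$ for each $j$ separately. For $n\geq3$ and $k,l\geq1$ the product $Y_kY_l$ has components in all degrees $|k-l|,|k-l|+2,\ldots,k+l$. What you actually get is $\sum_j\bigl(d(d-n+2)-\lambda_j\bigr)\int g_jY_kY_l\,\textup{d}\sigma=0$. Every place where you use the display has a single contributing $j$: the case $l=0$, the exponentials for $n=2$, and $g=g_0$ in the converse. So the proof is unaffected once the display is stated as a sum.
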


We now turn to the fixed-annulus problem, which is more difficult.
Let $\mathscr{H}_k$ denote the space of \emph{spherical harmonics} on $\mathbb{S}^{n-1}\subset\R^n$ of degree $k\geq0$, i.e., the restrictions to $\mathbb{S}^{n-1}$ of harmonic homogeneous polynomials of degree $k$ on $\R^n$.
It is well known \cite[Prop.~5.8]{ABR01} that
\[ d(k) := \dim\mathscr{H}_k = \begin{cases} 1 & \text{if } k=0, \\ n & \text{if } k=1, \\ \binom{n+k-1}{n-1}-\binom{n+k-3}{n-1} & \text{if } k\geq2. \end{cases} \]
Let
\[ Y_{k,1}, Y_{k,2}, \ldots, Y_{k,d(k)} \]
be an orthogonal basis of $\mathscr{H}_k$ with respect to the inner product on $\textup{L}^2(\mathbb{S}^{n-1})$,
\[ \langle G,H\rangle_{\textup{L}^2(\mathbb{S}^{n-1})} := \int_{\mathbb{S}^{n-1}} G(\sigma) \overline{H(\sigma)} \,\textup{d}\textup{S}(\sigma). \]
It is often convenient to allow $Y_{k,j}$ to be complex-valued. However, if one instead wants to give a characterization for real functions $f$, then one should choose $Y_{k,j}$ to be real-valued.
In the radial direction, we will find it convenient to use the weighted $\textup{L}^2$-space $\textup{L}^2\big([a,b],r^{n-1}\,\textup{d}r\big)$, consisting of all equivalence classes modulo equality almost everywhere of measurable functions $g\colon[a,b]\to\mathbb{C}$ satisfying
\[ \int_a^b |g(r)|^2 \,r^{n-1}\,\textup{d}r < \infty. \]
It is equipped with the inner product
\[ \langle g,h\rangle_{\textup{L}^2([a,b],r^{n-1}\textup{d}r)} := \int_a^b g(r) \overline{h(r)} \,r^{n-1}\,\textup{d}r. \]

\begin{theorem}\label{thm:fixed}
For $k\geq0$ define a finite set $\mathfrak{R}_k$ of monomials in variable $r$ by
\[ \mathfrak{R}_k :=
\begin{cases}
\emptyset & \text{if } n\geq2,\ k=0 \text{ or } n=2,\ k=1,\\[1mm]
\{r^{2\ell-k} : 0\leq\ell\leq k-2\} & \text{if } n=2,\ k\geq2,\\[1mm]
\{r^{2\ell-k-n+2} : 0\leq\ell\leq k-1\} & \text{if } n\geq3,\ k\geq1.
\end{cases} \]
Fix real numbers $a,b$ such that $r_-<a<b<r_+$, and set
\[ \mathfrak{G}_k := \Big\{ g\in \textup{L}^2\big([a,b],r^{n-1}\,\textup{d}r\big) : \underbrace{\int_a^b g(r) \overline{h(r)} \, r^{n-1} \,\textup{d}r}_{\langle g,h\rangle_{\textup{L}^2([a,b],r^{n-1}\textup{d}r)}} = 0 \text{ for every } h\in\mathfrak{R}_k \Big\}. \]
Then a complex-valued function $f\in\textup{L}^2(\mathscr{A}_{a,b})$ satisfies \eqref{eq:runcorn} for every harmonic $u$ on $\mathscr{U}$ and every harmonic $v$ on $\mathscr{V}$ vanishing at infinity if and only if it can be expanded in $n$-dimensional spherical coordinates as
\begin{equation}\label{eq:sph_exp}
f(r\sigma) = \sum_{k=0}^{\infty} \sum_{j=1}^{d(k)} g_{k,j}(r) Y_{k,j}(\sigma); \quad r\in[a,b], \ \sigma\in\mathbb{S}^{n-1},
\end{equation}
with convergence in $\textup{L}^2(\mathscr{A}_{a,b})$, where $g_{k,j}\in\mathfrak{G}_k$ for every $k\geq0$ and every $1\leq j\leq d(k)$.
\end{theorem}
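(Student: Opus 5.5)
The plan is to reduce \eqref{eq:runcorn} to a family of radial moment conditions, one for each spherical-harmonic component of $f$, by testing only on homogeneous harmonic functions and then passing to general $u,v$ by density. Every harmonic $u$ on $\mathscr{U}$ has an expansion $u=\sum_p r^pY_p(\sigma)$, $Y_p\in\mathscr{H}_p$, converging together with all derivatives uniformly on $\mathscr{A}_{a,b}$, since $b<r_+$. Likewise every harmonic $v$ on $\mathscr{V}$ vanishing at infinity is a Kelvin-type series $v=\sum_q r^{-q-n+2}Z_q(\sigma)$, converging uniformly with derivatives on $\mathscr{A}_{a,b}$ since $a>r_-$. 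In the series for $v$, the term $q=0$ is absent when $n=2$, because constants other than $0$ do not vanish at infinity and $\log r$ does not either. Since $f\in\textup{L}^2$ and the bilinear form is continuous under this convergence, \eqref{eq:runcorn} is equivalent to requiring it for all pairs $u=r^pY$, $v=r^{-q-n+2}Z$ with $Y\in\mathscr{H}_p$, $Z\in\mathscr{H}_q$.

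For such a pair, I would split the gradients into radial and tangential parts:
\[ \nabla u\cdot\nabla v = r^{p-q-n}\big(-p(q+n-2)YZ+\nabla_{\mathbb{S}}Y\cdot\nabla_{\mathbb{S}}Z\big). \]
I would then use $\nabla_{\mathbb{S}}Y\cdot\nabla_{\mathbb{S}}Z=\tfrac12\big(\Delta_{\mathbb{S}}(YZ)-Y\Delta_{\mathbb{S}}Z-Z\Delta_{\mathbb{S}}Y\big)$ together with the eigenvalues $-k(k+n-2)$ of $\Delta_{\mathbb{S}}$ on $\mathscr{H}_k$. Decomposing $YZ=\sum_k (YZ)_k$ with $(YZ)_k\in\mathscr{H}_k$, the $\mathscr{H}_k$-component of the angular factor becomes $c\,(YZ)_k$. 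A short computation gives, with $m:=q-p$,
\[ c=\tfrac12\big(m^2+(n-2)m-k(k+n-2)\big)=\tfrac12(m-k)(m+k+n-2). \]
Inserting the expansion \eqref{eq:sph_exp}, integrating in polar coordinates and using $\textup{L}^2(\mathbb{S}^{n-1})$-orthogonality of the $\mathscr{H}_k$, the integral \eqref{eq:runcorn} equals
\[ \sum_k \tfrac12(m-k)(m+k+n-2)\int_a^b r^{-m-n}\Big(\int_{\mathbb{S}^{n-1}} f_k(r\sigma)\,(YZ)_k(\sigma)\,\textup{d}\textup{S}(\sigma)\Big)r^{n-1}\,\textup{d}r, \]
where $f_k=\sum_j g_{k,j}Y_{k,j}$. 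The pairing here is bilinear; I would handle this by noting that $\overline{\mathscr{H}_k}=\mathscr{H}_k$, so the relevant coefficients are inner products with conjugated basis elements.

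The step I expect to be the main obstacle is the algebraic fact that, for fixed $p,q$ and any $k$ with $|p-q|\leq k\leq p+q$ and $k\equiv p+q \pmod 2$, the projections $(YZ)_k$ with $Y\in\mathscr{H}_p$, $Z\in\mathscr{H}_q$ span all of $\mathscr{H}_k$, while they vanish for all other $k$. Vanishing follows from parity and degree considerations. For spanning, I would use that the span is a nonzero $\textup{O}(n)$-invariant subspace of the irreducible space $\mathscr{H}_k$. Nonvanishing can be checked with zonal harmonics, or by an explicit Gegenbauer/Clebsch--Gordan coefficient.

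Granting this, condition \eqref{eq:runcorn} is equivalent to the following. For every $k$ and every admissible pair $(p,q)$ with $c\neq0$, one has $\int_a^b g_{k,j}(r)\,r^{-m-n}\,r^{n-1}\,\textup{d}r=0$ for all $j$. The monomial is real, so conjugation is harmless.

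It remains to enumerate the admissible values of $m$. Any $m\in\{-k,-k+2,\dots,k\}$ is realized by some pair $p,q\geq0$, for instance by taking $p+q=k$. The factor $c$ vanishes exactly when $m=k$, or when $m=-k-n+2$; the latter lies in this range only if $n=2$, where it equals $-k$. This matches the exclusion of $q=0$ for $n=2$, and the case $p=0$ is automatically harmless because it forces $m=k$.

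Writing $-m-n=2\ell-k-n+2$, I would read off the exponents. For $n\geq3$, $m$ runs over $-k,\dots,k-2$, giving $0\leq\ell\leq k-1$. For $n=2$, $m$ runs over $-k+2,\dots,k-2$, giving exponents $2\ell-k$ with $0\leq\ell\leq k-2$. For $k=0$, and for $n=2$, $k=1$, no admissible $m$ remains. This is exactly the family $\mathfrak{R}_k$, so the conditions say precisely that $g_{k,j}\in\mathfrak{G}_k$. The converse direction follows from the same computation combined with the density argument of the first paragraph.
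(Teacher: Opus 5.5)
Your reduction to homogeneous pairs, the angular computation and the sufficiency direction are sound. In particular, $c=\tfrac12(m-k)(m+k+n-2)$ agrees with the paper's coefficient $-(p-\ell)(n-2+2q-2\ell)$ in \eqref{eq:graddec} under $k=p+q-2\ell$. The gap is in necessity, at ``Granting this, condition \eqref{eq:runcorn} is equivalent to\ldots''. A single test pair $u=r^pY$, $v=r^{-q-n+2}Z$ with $p+q=k$ gives
\[ \sum_{\ell\geq0} c_{k-2\ell}\int_a^b r^{-m-1}\int_{\mathbb{S}^{n-1}} f(r\sigma)\,(YZ)_{k-2\ell}(\sigma)\,\textup{d}\textup{S}(\sigma)\,\textup{d}r=0. \]
This is one relation among several moment functionals, all with the same weight $r^{-m-1}$. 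For general $Y,Z$ the lower components are nonzero and have nonzero coefficients. For example, take $n=3$, $p=2$, $q=1$, $Y=2x_3^2-x_1^2-x_2^2$, $Z=x_3$: the $\mathscr{H}_1$-component is $\tfrac45x_3$, with $c=-1$. Knowing that each map $Y\otimes Z\mapsto(YZ)_{k'}$ is separately onto $\mathscr{H}_{k'}$ does not let you prescribe the $\mathscr{H}_k$-slot while the other slots vanish. So you cannot yet conclude that the $\mathscr{H}_k$-term vanishes on its own.

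The cleanest repair is the paper's. Test with $Y=(\zeta\cdot\sigma)^p$ and $Z=(\zeta\cdot\sigma)^q$, where $\zeta\cdot\zeta=0$. Then $YZ=(\zeta\cdot\sigma)^k$ already lies in $\mathscr{H}_k$, so no lower components occur; this is Lemma~\ref{lem:specialpair}. Lemma~\ref{lem:span} shows these functions span $\mathscr{H}_k$, which also replaces your $\textup{O}(n)$-irreducibility argument. Within your own framework there are two alternatives. You could induct on $k$: each lower term has $|m|\leq k'$ and either $c_{k'}=0$ or $r^{-m-n}\in\mathfrak{R}_{k'}$, so it vanishes by the inductive hypothesis. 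Or you could use Schur's lemma: the $\mathscr{H}_{k'}$ are pairwise inequivalent irreducible $\textup{O}(n)$-modules, so the joint image of $Y\otimes Z\mapsto((YZ)_{k'})_{k'}$ is the full direct sum of those $\mathscr{H}_{k'}$ onto which it projects nontrivially. Separately, your spanning claim is false for $n=2$ in the middle range. There $e^{\pm\mathbbm{i}p\varphi}e^{\pm\mathbbm{i}q\varphi}$ has only frequencies $p+q$ and $|p-q|$, so $(YZ)_k=0$ for $|p-q|<k<p+q$. This does no damage, since necessity only uses $k=p+q$ and sufficiency only uses vanishing outside $|p-q|\leq k\leq p+q$. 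You should still restrict the claim to the top degree.
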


The elements of $\mathfrak{R}_k$ are Laurent monomials, i.e., integer powers of $r$.

Observe that, for almost every fixed $r\in[a,b]$, equality \eqref{eq:sph_exp} constitutes an expansion of $f(r\cdot)$ into spherical harmonics, so their mutual orthogonality on $\mathbb{S}^{n-1}$ implies 
\begin{equation}\label{eq:sph_exp_coeff}
g_{k,j}(r)=\frac{1}{\|Y_{k,j}\|_{\textup{L}^2(\mathbb{S}^{n-1})}^2}
\int_{\mathbb{S}^{n-1}} f(r\sigma)\overline{Y_{k,j}(\sigma)}\,\textup{d}\textup{S}(\sigma)
\end{equation}
for a.e.\@ $r\in[a,b]$.
Therefore, an equivalent characterization of the square-integrable functions $f$ satisfying \eqref{eq:runcorn} is obtained by requiring that \eqref{eq:sph_exp_coeff} belongs to $\mathfrak{G}_k$ for every $k\geq0$ and every $1\leq j\leq d(k)$.

In view of Theorem~\ref{thm:fixed}, yet another way of rewriting the solution space of \eqref{eq:runcorn} on the fixed annulus $\mathscr{A}_{a,b}$ is as the Hilbert-space orthogonal sum
\[ \bigoplus_{k=0}^{\infty} \,\bigl(\mathfrak{G}_k\otimes \mathscr{H}_k\bigr), \]
where $\otimes$ stands for the radial--tangential tensor splitting of $n$-dimensional functions.
Since the elements of $\mathfrak{R}_k$ have pairwise distinct exponents, they are linearly independent functions on every nondegenerate interval $[a,b]$. Hence, $\mathfrak{G}_k$, as the orthogonal complement of the linear span of $\mathfrak{R}_k$ with respect to the inner product $\langle\cdot,\cdot\rangle_{\textup{L}^2([a,b],r^{n-1}\textup{d}r)}$, satisfies
\[ \operatorname{codim}\mathfrak{G}_k = \dim\operatorname{span}\mathfrak{R}_k = \mathop{\textup{card}}\mathfrak{R}_k. \]
In particular, $\mathfrak{G}_0=\textup{L}^2\big([a,b],r^{n-1}\,\textup{d}r\big)$, so the degree-zero subspace is without any constraints, recovering generalized Runcorn's theorem, while every higher-order tangential term still contributes an infinite-dimensional family of solutions.

The proof of Theorem~\ref{thm:fixed} relies on several classical ingredients from harmonic analysis, such as the Kelvin transform of a harmonic function and the harmonic decomposition of a general homogeneous polynomial; see \eqref{eq:SWdecompose} below. However, it takes a nontrivial effort to characterize the exact linear span of the inner products $\nabla u\cdot\nabla v$, and for this purpose we use a less standard spanning set of harmonic homogeneous polynomials, parametrized by complex $n$-tuples $(\zeta_1,\ldots,\zeta_n)$ such that $\sum_m \zeta_m^2 = 0$; see Lemma~\ref{lem:span}.
Theorem~\ref{thm:varying} is then recovered simply by applying the fixed-annulus classification to every sub-annulus of $\mathscr{U}\cap\mathscr{V}$.

For later use we record some low-dimensional cases explicitly.
\begin{itemize}
\item If $n=2$, then $\mathfrak{R}_0=\mathfrak{R}_1=\emptyset$ and
\[ \mathfrak{R}_k = \{r^{2\ell-k}:0\leq \ell\leq k-2\}; \quad k\geq2. \]
Hence $\operatorname{codim}\mathfrak{G}_k = k-1$ for $k\geq2$.
\item If $n=3$, then $\mathfrak{R}_0=\emptyset$ and
\[ \mathfrak{R}_k = \{r^{2\ell-k-1}:0\leq \ell\leq k-1\}; \quad k\geq1. \]
Hence $\operatorname{codim}\mathfrak{G}_k = k$ for $k\geq1$.
\end{itemize}
These are discussed in detail in Section~\ref{sec:examples}.

As we have already mentioned, several papers on magnetization and/or inverse problems are related to our work through uniqueness, non-uniqueness, and source decompositions. None of them directly proves Theorems~\ref{thm:varying} or \ref{thm:fixed}, but some of the literature develops related theories in much greater depth. Our aim is instead to give a concrete mathematical discussion of a particular problem coming from physics.


\section{Several auxiliary results}

This section contains three auxiliary lemmas. The first one identifies a convenient spanning family of harmonic polynomials, the second one computes the gradient product $\nabla u \cdot \nabla v$ for special pairs $u,v$, while the third (and the most substantial) one computes the linear span of more general dot products $\nabla u \cdot \nabla v$.
Each of them is a consequence of the existing literature on harmonic functions \cite{SW71,ABR01} and some computation. Still, we include their full proofs, because we need to prepare them for their application in the next section.

For $k\geq0$ let $\mathscr{P}_k$ denote the vector space of complex homogeneous polynomials of degree $k$ on $\R^n$. Let $\mathcal{H}_k\subseteq\mathscr{P}_k$ denote the subspace of harmonic ones; these are called \emph{solid spherical harmonics} \cite[\S IV.2]{SW71}.
Restricting a polynomial from $\mathcal{H}_k$ to $\mathbb{S}^{n-1}$ identifies $\mathcal{H}_k$ with $\mathscr{H}_k$.

The complex vector space $\mathscr{P}_k$ is equipped with the inner product $\langle\cdot,\cdot\rangle_{\mathscr{P}_k}$ defined as
\[ \Bigl\langle \sum_{|\alpha|=k} c_\alpha x^\alpha, \sum_{|\alpha|=k} d_\alpha x^\alpha \Bigr\rangle_{\mathscr{P}_k}
:= \sum_{|\alpha|=k} \alpha! \,c_\alpha \overline{d_\alpha}. \]
Here $\alpha$ is an $n$-tuple $(\alpha_1,\alpha_2,\ldots,\alpha_n)$ of non-negative integers,
\[ |\alpha| := \alpha_1 + \alpha_2 + \cdots + \alpha_n, \]
\[ \alpha! := \alpha_1! \alpha_2! \cdots \alpha_n! \]
(with $0!:=1$), and for $x=(x_1,x_2,\ldots,x_n)$ we set
\[ x^\alpha := x_1^{\alpha_1} x_2^{\alpha_2} \cdots x_n^{\alpha_n}. \]
If we also agree to use the partial differential notation
\[ D^\alpha := \partial_1^{\alpha_1} \partial_2^{\alpha_2} \cdots \partial_n^{\alpha_n}
= \frac{\partial^{\alpha_1 + \alpha_2 + \cdots + \alpha_n}}{\partial x_1^{\alpha_1} \partial x_2^{\alpha_2} \cdots \partial x_n^{\alpha_n}}, \]
then the same inner product on $\mathscr{P}_k$ can equivalently be written as
\[ \langle P,Q\rangle_{\mathscr{P}_k} = P(D) \overline{Q}; \]
see \cite[\S IV.2]{SW71}.

\begin{lemma}\label{lem:span}
For every integer $k\geq0$, the space $\mathcal{H}_k$ is spanned by the polynomials
\begin{equation}\label{eq:zeta_poly}
x\mapsto (\zeta\cdot x)^k,
\end{equation}
where $\zeta\in\mathbb{C}^n$ satisfies $\zeta\cdot\zeta=0$.
\end{lemma}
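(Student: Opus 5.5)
First I will check that each polynomial \eqref{eq:zeta_poly} lies in $\mathcal{H}_k$. It is clearly homogeneous of degree $k$, and a direct computation gives
\[ \Delta (\zeta\cdot x)^k = k(k-1)(\zeta\cdot\zeta)(\zeta\cdot x)^{k-2} = 0. \]
The cases $k=0,1$ are trivial. So the span $S_k$ of these polynomials is a subspace of $\mathcal{H}_k$. For the reverse inclusion I will use the inner product $\langle\cdot,\cdot\rangle_{\mathscr{P}_k}$ and show that the orthogonal complement of $S_k$ inside $\mathcal{H}_k$ is trivial. Note first that the set of admissible $\zeta$ is closed under complex conjugation, since $\overline{\zeta}\cdot\overline{\zeta}=\overline{\zeta\cdot\zeta}=0$.

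Suppose $P\in\mathcal{H}_k$ is orthogonal to every $(\zeta\cdot x)^k$. By the differential form of the inner product,
\[ \langle P,(\zeta\cdot x)^k\rangle_{\mathscr{P}_k} = P(D)\,\overline{(\zeta\cdot x)^k} = P(D)(\overline{\zeta}\cdot x)^k. \]
Each monomial derivative $D^\alpha$ with $|\alpha|=k$ acts on $(\eta\cdot x)^k$ as $k!\,\eta^\alpha$. Hence the expression equals $k!\,P(\overline{\zeta})$. Replacing $\zeta$ by $\overline{\zeta}$, the hypothesis becomes: $P(\eta)=0$ for every $\eta$ in the complex null quadric
\[ Q:=\{\eta\in\mathbb{C}^n:\eta\cdot\eta=0\}. \]

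The main step is to deduce $P=0$ from the facts that $P$ is harmonic and vanishes on $Q$. For $n\geq2$ the polynomial $q(\eta)=\eta_1^2+\cdots+\eta_n^2$ is irreducible when $n\geq3$. By the Nullstellensatz, vanishing on $Q$ then gives $q\mid P$, so $P=|x|^2 R$ with $R\in\mathscr{P}_{k-2}$.

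For $n=2$ we have $q=(\eta_1+\mathbbm{i}\eta_2)(\eta_1-\mathbbm{i}\eta_2)$. Here $P$ vanishes on both lines, so each linear factor divides $P$, and again $q\mid P$.

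To finish, I will use the standard orthogonality (Stein--Weiss \cite[\S IV.2]{SW71}): $\mathcal{H}_k$ is orthogonal to $|x|^2\mathscr{P}_{k-2}$ in $\langle\cdot,\cdot\rangle_{\mathscr{P}_k}$. Indeed,
\[ \langle |x|^2R,H\rangle_{\mathscr{P}_k} = R(D)\,\Delta\overline{H} = 0. \]
Applying this with $H=P$ gives $\|P\|^2_{\mathscr{P}_k}=\langle |x|^2R,P\rangle_{\mathscr{P}_k}=0$, so $P=0$.

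I expect the main obstacle to be the divisibility step, and I would rather avoid invoking the Nullstellensatz. An elementary alternative is to parametrize points of $Q$ explicitly, for instance $\eta=(s,\mathbbm{i}s,0,\ldots,0)$ together with its images under rotations and coordinate permutations. Another is to argue by induction on $n$, writing $P$ as a polynomial in $\eta_n$ and reducing modulo $q$. If divisibility proves awkward, I would instead use a dimension count: compare $S_k$ with the restriction of $\mathscr{P}_k$ to $Q$, which is isomorphic to $\mathscr{P}_k/q\mathscr{P}_{k-2}$, and show this has dimension $d(k)$.
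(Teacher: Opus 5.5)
Your proposal is correct and follows the paper's proof step for step. The shared steps are the identity $\langle P,(\zeta\cdot x)^k\rangle_{\mathscr{P}_k}=k!\,P(\overline{\zeta})$, reduction to vanishing on the null quadric, divisibility by $z\cdot z$, and the conclusion $\langle |x|^2R,P\rangle_{\mathscr{P}_k}=R(D)\Delta\overline{P}=0$. The one difference is the divisibility step: the paper uses the elementary route you sketched, with no induction on $n$. It divides $P$ by the polynomial $z_n^2+\sum_{m<n}z_m^2$, monic in $z_n$, and kills the remainder $A(z')z_n+B(z')$ by evaluating at the two roots $\pm\tau$. Your Nullstellensatz argument (primality of $(q)$ for $n\geq3$, two coprime linear factors for $n=2$) is equally valid, just less self-contained.
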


We recall that the condition $\zeta\cdot\zeta=0$ should be read as
$\sum_m \zeta_m^2 = 0$ if $\zeta$ is given by its coordinates $(\zeta_1,\ldots,\zeta_n)$, or, equivalently, as $\zeta^\textup{T} \zeta=0$ if $\zeta$ is understood to be a column vector.

\begin{proof}
This exact result can be found in the literature, e.g., as \cite[Lemma~4.10]{Flavi}. However, since its formulation is not entirely standard, we give a self-contained elementary proof, different from the one in \cite{Flavi}.

The claim is trivial for $k\leq1$, so assume that $k\geq2$.
Let $\mathcal{E}_k$ be the span of the polynomials \eqref{eq:zeta_poly} as $\zeta\in\mathbb{C}^n$ ranges over complex tuples such that $\zeta\cdot\zeta=0$.
All these polynomials are harmonic because of
\begin{equation}\label{eq:form_harm}
\Delta (\zeta\cdot x)^k = k(k-1)(\zeta\cdot\zeta)(\zeta\cdot x)^{k-2}
\end{equation}
and $\zeta\cdot\zeta=0$, so $\mathcal{E}_k\subseteq\mathcal{H}_k$.

In order to prove the reverse set inclusion, we take a polynomial $P\in\mathcal{H}_k$ that is orthogonal to $\mathcal{E}_k$ with respect to the inner product $\langle\cdot,\cdot\rangle_{\mathscr{P}_k}$. 
Write it explicitly as
\[ P(x) = \sum_{|\alpha|=k} c_\alpha x^\alpha \]
for complex coefficients $c_\alpha$.
By the multinomial theorem,
\begin{align*}
\langle P(x), (\zeta\cdot x)^k\rangle_{\mathscr{P}_k} 
& = \Bigl\langle \sum_{\substack{\alpha=(\alpha_1,\ldots,\alpha_n)\\ |\alpha|=k}} c_\alpha x_1^{\alpha_1} \cdots x_n^{\alpha_n}, \sum_{\substack{\alpha=(\alpha_1,\ldots,\alpha_n)\\ |\alpha|=k}} \frac{k!}{\alpha!} (\zeta_1 x_1)^{\alpha_1} \cdots (\zeta_n x_n)^{\alpha_n} \Bigr\rangle_{\mathscr{P}_k} \\
& = k! \sum_{\substack{\alpha=(\alpha_1,\ldots,\alpha_n)\\ |\alpha|=k}} c_\alpha (\overline{\zeta_1})^{\alpha_1} \cdots (\overline{\zeta_n})^{\alpha_n} = k! P(\overline{\zeta}).
\end{align*}
Regard $P$ as an element of the polynomial ring $\mathbb{C}[z_1,\ldots,z_n]$. Since complex
conjugation is a bijection of the complex quadric $\sum_{m=1}^n z_m^2=0$ onto itself, the preceding identity shows that $P(z)=0$ holds whenever $z\cdot z=0$. We now argue that this implies divisibility of $P(z)$ by the quadratic polynomial $z\cdot z=\sum_{m=1}^n z_m^2$.

Set $z'=(z_1,\ldots,z_{n-1})$ and
\[ S(z'):=\sum_{m=1}^{n-1}z_m^2. \]
Since $z_n^2+S(z')$ is monic as a polynomial in $z_n$, division in the ring $\mathbb{C}[z_1,\ldots,z_{n-1}][z_n]$ gives
\[ P(z) = \bigl(z_n^2+S(z')\bigr)Q(z) + A(z')z_n + B(z') \]
for some $Q\in\mathbb{C}[z_1,\ldots,z_n]$ and $A,B\in\mathbb{C}[z_1,\ldots,z_{n-1}]$.
Fix $z'$ such that $S(z')\neq0$, and choose $\tau\in\mathbb{C}\setminus\{0\}$ satisfying $\tau^2=-S(z')$. Since both $(z',\tau)$ and $(z',-\tau)$ belong to the quadric $z\cdot z=0$, we have
\[ A(z')\tau+B(z')=0, \quad -A(z')\tau+B(z')=0. \]
Thus $A(z')=B(z')=0$. Consequently, $A$ and $B$ vanish on the nonempty open set $\{z':S(z')\neq0\}$, and hence vanish identically.
Therefore, 
\begin{equation}\label{eq:PzQ}
P(z)=(z\cdot z)Q(z)
\end{equation}
and $Q$ is a homogeneous polynomial of degree $k-2$.

By restricting $z=(z_1,\ldots,z_n)$ in \eqref{eq:PzQ} to be a real $n$-tuple, $z=x\in\R^n$, we can now write
\[ P(x) = |x|^2 Q(x) \]
for some $Q\in\mathscr{P}_{k-2}$.
However,
\[ \langle P, P \rangle_{\mathscr{P}_k} = \langle |x|^2 Q(x), P(x) \rangle_{\mathscr{P}_k} 
= Q(D) \Delta \overline{P} = \langle Q, \underbrace{\Delta P}_{=0} \rangle_{\mathscr{P}_{k-2}} = 0 \]
implies that $P$ is the nul-polynomial. We conclude that $\mathcal{E}_k=\mathcal{H}_k$.
\end{proof}

The \emph{Kelvin transform} of an $n$-dimensional function $v$ with respect to the unit sphere $\mathbb{S}^{n-1}$ is defined as
\[ (\mathcal{K}v)(x) := \frac{1}{|x|^{n-2}} v\Bigl(\frac{x}{|x|^2}\Bigr). \]
It is a standard fact in potential theory that $v$ is harmonic on an open set $\Omega\subseteq\R^n\setminus\{0\}$ if and only if $\mathcal{K}v$ is harmonic on $\{x : x/|x|^2\in\Omega\}$; see \cite[Thm.~4.7]{ABR01}.
If $Q\in\mathcal{H}_k$, then its Kelvin transform takes a simpler form,
\begin{equation}\label{eq:Kelvin_of_Q}
(\mathcal{K}Q)(x) = |x|^{2-n-2k} Q(x); \quad x\in\R^n\setminus\{0\}.
\end{equation}
Now we also know that $\mathcal{K}Q$ is harmonic on $\R^n\setminus\{0\}$. If $n=2$, then $\mathcal{K}Q(x)\to0$ as $|x|\to\infty$ whenever $k\geq1$. If $n\geq3$, this happens already for $k\geq0$; see \cite[Ch.~4]{ABR01}.

\begin{lemma}\label{lem:specialpair}
Let $\zeta\in\mathbb{C}^n$ satisfy $\zeta\cdot\zeta=0$. Take an integer $p\geq1$. If $n=2$ take an integer $q\geq1$; if $n\geq3$ take an integer $q\geq0$. Define
\begin{align*} 
u(x) & := (\zeta\cdot x)^p; \quad x\in\R^n, \\ 
v(x) & := |x|^{2-n-2q}(\zeta\cdot x)^q; \quad x\in\R^n\setminus\{0\}. 
\end{align*}
Then $u$ is harmonic on $\R^n$, $v$ is harmonic on $\R^n\setminus\{0\}$, $v$ vanishes at infinity, and
\begin{equation}\label{eq:specialpair}
 \nabla u(x)\cdot\nabla v(x)
 = \underbrace{-p\,(n+2q-2)}_{\neq0}\,|x|^{-n-2q}(\zeta\cdot x)^{p+q}; \quad x\in\R^n\setminus\{0\}.
\end{equation}
\end{lemma}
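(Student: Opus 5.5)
Harmonicity of $u$ follows directly from \eqref{eq:form_harm} with $k=p$ and $\zeta\cdot\zeta=0$. For $v$, the plan is to observe that $Q(x):=(\zeta\cdot x)^q$ lies in $\mathcal{H}_q$ by the same identity. Then \eqref{eq:Kelvin_of_Q} shows $v=\mathcal{K}Q$, and the quoted potential-theoretic facts give harmonicity of $v$ on $\R^n\setminus\{0\}$. Vanishing at infinity also follows from those facts: $q\geq1$ is needed when $n=2$, while $q\geq0$ suffices when $n\geq3$. A direct check is also available, since $|v(x)|\leq|\zeta|^q|x|^{2-n-q}$ and the exponent $2-n-q$ is negative under the stated hypotheses.

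The main content is the identity \eqref{eq:specialpair}, which I will compute directly. First,
\[ \nabla u(x)=p(\zeta\cdot x)^{p-1}\zeta. \]
By the product rule,
\[ \nabla v(x)=(2-n-2q)|x|^{-n-2q}(\zeta\cdot x)^q\,x+q|x|^{2-n-2q}(\zeta\cdot x)^{q-1}\zeta, \]
using $\nabla|x|^{s}=s|x|^{s-2}x$. When $q=0$ the second term is simply absent.

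Next I take the bilinear dot product, using $\zeta\cdot x$ for the first term and $\zeta\cdot\zeta=0$ for the second. The second term of $\nabla v$ is annihilated, and what remains is
\[ p(2-n-2q)|x|^{-n-2q}(\zeta\cdot x)^{p+q}, \]
which is exactly \eqref{eq:specialpair}.

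Finally, the coefficient $-p(n+2q-2)$ is nonzero because $p\geq1$ and $n+2q-2>0$. For $n\geq3$ this holds even with $q=0$; for $n=2$ it holds because $q\geq1$. There is no real obstacle here. The only care needed is with the bilinear (non-conjugated) dot product, so that $\zeta\cdot\zeta=0$ can legitimately be used, and with the edge case $q=0$, where the derivative of $(\zeta\cdot x)^q$ vanishes and no negative power of $\zeta\cdot x$ appears.
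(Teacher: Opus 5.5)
Your proposal is correct and follows essentially the same route as the paper: harmonicity of $u$ via \eqref{eq:form_harm}, harmonicity of $v$ as the Kelvin transform of $(\zeta\cdot x)^q$, decay from the homogeneity degree $2-n-q$, and the same explicit gradient computation in which $\zeta\cdot\zeta=0$ kills the second term of $\nabla v$. Your explicit bound $|v(x)|\leq|\zeta|^q|x|^{2-n-q}$ and your separate treatment of $q=0$ (where the term $q(\zeta\cdot x)^{q-1}\zeta$ is simply absent) are slightly more careful than the paper's one-line versions, but they do not change the argument.
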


\begin{proof}
Formula \eqref{eq:form_harm} and $\zeta\cdot\zeta=0$ show that $u$ is harmonic.
Also, $v$ is the Kelvin transform of the harmonic homogeneous polynomial $(\zeta\cdot x)^q$, and, hence, it is harmonic on $\R^n\setminus\{0\}$.
The decay at infinity is immediate from the homogeneity of degree $2-n-q$, together with the restriction $q\geq1$ when $n=2$.
Finally,
\[ \nabla u(x)=p(\zeta\cdot x)^{p-1}\zeta \]
and
\[ \nabla v(x)=(2-n-2q)|x|^{-n-2q}(\zeta\cdot x)^q x + q|x|^{2-n-2q}(\zeta\cdot x)^{q-1}\zeta. \]
Taking the dot product and using $\zeta\cdot\zeta=0$ gives exactly \eqref{eq:specialpair}.
\end{proof}

A standard decomposition (see \cite[Thm.~2.1]{SW71} or \cite[Thm.~5.7]{ABR01}) writes an arbitrary $P\in\mathscr{P}_k$ uniquely as the sum
\begin{equation}\label{eq:SWdecompose}
P(x) = \sum_{\ell=0}^{\lfloor k/2\rfloor} |x|^{2\ell} H_{k-2\ell}(x); \quad x\in\R^n,
\end{equation}
for some spherical harmonics $H_{k-2\ell}\in\mathcal{H}_{k-2\ell}$ indexed by $0\leq\ell\leq \lfloor k/2\rfloor$.
Moreover, \cite[Thm.~5.21]{ABR01} expresses $H_{k-2\ell}(x)$ as a linear combination of
\begin{equation}\label{eq:SWdecompose2}
|x|^{2(j-\ell)} \Delta^j P(x); \quad \ell\leq j\leq \lfloor k/2\rfloor.
\end{equation}

\begin{lemma}\label{lem:decompose}
\begin{itemize}
\item[(a)] Let $n\geq3$, $p\geq1$, $q\geq0$ be integers and let $P\in\mathcal{H}_p$, $Q\in\mathcal{H}_q$ be solid spherical harmonics. Then, for $x\in\R^n\setminus\{0\}$,
\[ \nabla P(x)\cdot\nabla (\mathcal{K}Q)(x)
= \sum_{\ell=0}^{\min\{p-1,q\}} |x|^{-n-2q+2\ell} H_{p+q-2\ell}(x), \]
where $H_{p+q-2\ell}\in\mathcal{H}_{p+q-2\ell}$ for each index $\ell$.
\item[(b)] Let $n=2$, $p\geq1$, $q\geq1$ be integers and let $P\in\mathcal{H}_p$, $Q\in\mathcal{H}_q$ be solid spherical harmonics. Then, for $x\in\R^2\setminus\{0\}$,
\[ \nabla P(x)\cdot\nabla (\mathcal{K}Q)(x)
= \sum_{\ell=0}^{\min\{p-1,q-1\}} |x|^{-2-2q+2\ell} H_{p+q-2\ell}(x), \]
where $H_{p+q-2\ell}\in\mathcal{H}_{p+q-2\ell}$ for each index $\ell$.
\end{itemize}
\end{lemma}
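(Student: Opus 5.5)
The plan is a direct computation followed by the decomposition \eqref{eq:SWdecompose}, with \eqref{eq:SWdecompose2} used to control which harmonic components can occur. First, by \eqref{eq:Kelvin_of_Q}, $\mathcal{K}Q=|x|^{2-n-2q}Q$. Hence
\[ \nabla(\mathcal{K}Q)(x)=(2-n-2q)|x|^{-n-2q}Q(x)\,x+|x|^{2-n-2q}\nabla Q(x). \]
Euler's identity gives $x\cdot\nabla P=pP$, and therefore
\[ \nabla P\cdot\nabla(\mathcal{K}Q)=|x|^{-n-2q}R(x),\qquad R:=(2-n-2q)\,p\,PQ+|x|^2\,\nabla P\cdot\nabla Q, \]
where $R\in\mathscr{P}_{p+q}$. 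Decomposing $R=\sum_\ell |x|^{2\ell}H_{p+q-2\ell}$ via \eqref{eq:SWdecompose} gives exactly the claimed shape. What remains is to show that $H_{p+q-2\ell}=0$ for $\ell$ beyond the stated range.

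For the range of $\ell$, I would use \eqref{eq:SWdecompose2}: $H_{p+q-2\ell}$ is a combination of $|x|^{2(j-\ell)}\Delta^jR$ with $j\geq\ell$. So it suffices to show $\Delta^jR=0$ for all $j>\min\{p-1,q\}$ when $n\geq3$, and for all $j>\min\{p-1,q-1\}$ when $n=2$.

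The key identity is that for harmonic $P,Q$,
\[ \Delta^j(PQ)=2^j\,T_j,\qquad T_j:=\sum_{|\alpha|=j}\frac{j!}{\alpha!}\,D^\alpha P\,D^\alpha Q. \]
This follows by induction from $\Delta(FG)=2\nabla F\cdot\nabla G$ for harmonic $F,G$, since all derivatives of $P$ and $Q$ are harmonic. Consequently $\Delta^j(PQ)=0$ once $j>\min\{p,q\}$. Applying the same identity to each product $\partial_mP\,\partial_mQ$ gives
\[ \Delta^{j}(\nabla P\cdot\nabla Q)=2^{j}\,T_{j+1}, \]
using the multinomial recursion $\sum_m\sum_{|\beta|=j}\frac{j!}{\beta!}D^\beta\partial_m\,\cdot=\sum_{|\alpha|=j+1}\frac{(j+1)!}{\alpha!}D^\alpha\,\cdot$. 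In particular $\Delta^{j}(\nabla P\cdot\nabla Q)=0$ for $j>\min\{p,q\}-1$. Combining this with the commutator formula
\[ \Delta^j(|x|^2F)=|x|^2\Delta^jF+2j\,(n+2\deg F-2j+2)\,\Delta^{j-1}F \]
for homogeneous $F$, we get $\Delta^jR=0$ for all $j>\min\{p,q\}$.

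The main obstacle is the borderline index, where a cancellation between the two terms of $R$ is needed. Let $m=\min\{p,q\}$ and take $F=\nabla P\cdot\nabla Q$, so $\deg F=p+q-2$ and $\Delta^mF=0$. Then
\[ \Delta^mR=\Bigl[(2-n-2q)\,p\,2^m+2m\,(n+2(p+q-2)-2m+2)\,2^{m-1}\Bigr]T_m. \]
If $m=p\leq q$, the bracket equals $p2^p\bigl[(2-n-2q)+(n+2q-2)\bigr]=0$, which removes $\ell=p$ and yields the bound $\min\{p-1,q\}$ in part~(a). When $n\geq3$ and $q<p$ nothing more is needed, since then $\min\{p-1,q\}=q$. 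When $n=2$ and $m=q<p$, the bracket becomes $-2pq\,2^q+2q\cdot2p\cdot2^{q-1}=0$, which removes $\ell=q$ and gives the bound $\min\{p-1,q-1\}$ in part~(b). For $n=2$ with $p\leq q$, the first case already applies. I expect this borderline cancellation, namely verifying the commutator formula and the normalization of $T_j$, to be the only delicate step; everything else is bookkeeping.
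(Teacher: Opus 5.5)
Your proof is correct, but it is organized differently from the paper's. The paper never expands the gradient product directly. It uses $2\nabla P\cdot\nabla(\mathcal{K}Q)=\Delta(P\,\mathcal{K}Q)=\Delta\bigl(|x|^{2-n-2q}PQ\bigr)$. It then truncates the harmonic decomposition of $PQ$ at $\ell\le m$ via $\Delta^{m+1}(PQ)=0$, and applies $\Delta(|x|^\lambda H)=\lambda(\lambda+2k+n-2)|x|^{\lambda-2}H$ term by term. This gives the explicit relation $H_{p+q-2\ell}=-(p-\ell)(n-2+2q-2\ell)K_{p+q-2\ell}$, where the $K_{p+q-2\ell}$ are the harmonic components of $PQ$.

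In that formulation, the borderline cancellations you check by hand appear structurally: the factor $p-\ell$ vanishes at $\ell=p$, and for $n=2$ the factor $n-2+2q-2\ell$ vanishes at $\ell=q$. Indeed, your bracket at $j=m$ equals $-2^m(p-m)(n-2+2q-2m)$, which is exactly the paper's coefficient at $\ell=m$.

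Your route works on the polynomial $R$ directly. It therefore needs the commutator formula for $\Delta^j(|x|^2F)$ and a separate verification at the index $j=m$, both of which you carry out correctly. The paper's route instead uses the Laplacian of $P\,\mathcal{K}Q$ together with the radial formula. That is shorter, handles all $\ell$ uniformly, and identifies every surviving component in terms of the decomposition of $PQ$. Both arguments rest on the same two facts: $\Delta^s(PQ)=2^sT_s$ for harmonic $P,Q$, and the representation \eqref{eq:SWdecompose2} of the harmonic components through iterated Laplacians.
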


\begin{proof}
We simultaneously prove both parts of the lemma.
Let us first record the following elementary identity. If $H\in\mathcal{H}_k$ and $\lambda\in\mathbb{R}$, then
\begin{equation}\label{eq:radial_harmonic}
\Delta\bigl(|x|^\lambda H(x)\bigr)
= \lambda\bigl(\lambda+2k+n-2\bigr)|x|^{\lambda-2}H(x); \quad x\in\mathbb{R}^n\setminus\{0\}.
\end{equation}
Indeed,
\[ \nabla |x|^\lambda=\lambda |x|^{\lambda-2}x, \quad \Delta |x|^\lambda=\lambda(\lambda+n-2)|x|^{\lambda-2}, \]
and
\[ x\cdot\nabla H(x) = \frac{\textup{d}}{\textup{d}t}\bigg|_{t=1} H(tx) = \frac{\textup{d}}{\textup{d}t}\bigg|_{t=1} t^k H(x) = k H(x). \]
We also recall the harmonicity of $H$, i.e., $\Delta H=0$.
The product rule therefore yields
\begin{align*}
\Delta\bigl(|x|^\lambda H(x)\bigr)
& =(\Delta |x|^\lambda)H(x) +2\nabla |x|^\lambda\cdot\nabla H(x) + |x|^\lambda\Delta H(x) \\
& =\lambda(\lambda+n-2)|x|^{\lambda-2}H(x) + 2\lambda |x|^{\lambda-2}x\cdot\nabla H(x) \\
& =\lambda(\lambda+2k+n-2)|x|^{\lambda-2}H(x),
\end{align*}
which proves \eqref{eq:radial_harmonic}.

Denote
\[ d:=p+q, \quad m:=\min\{p,q\}. \]
By the standard harmonic decomposition \eqref{eq:SWdecompose}, there are unique polynomials
\[ K_{d-2\ell}\in\mathcal{H}_{d-2\ell}; \quad 0\leq\ell\leq\lfloor d/2\rfloor \]
such that
\begin{equation}\label{eq:PQ_full}
P(x)Q(x) = \sum_{\ell=0}^{\lfloor d/2\rfloor} |x|^{2\ell}K_{d-2\ell}(x).
\end{equation}
However, the optimal summation limits are more subtle and we claim that
\begin{equation}\label{eq:PQ_high}
K_{d-2\ell}=0 \quad\text{whenever }\ell>m.
\end{equation}
Namely, for every nonnegative integer $s$ define
\[ R_s := \sum_{i_1,\ldots,i_s=1}^n
\bigl(\partial_{i_1}\cdots\partial_{i_s}P\bigr)
\bigl(\partial_{i_1}\cdots\partial_{i_s}Q\bigr), \]
with the convention $R_0=PQ$. Every partial derivative of $P$ and of $Q$ is harmonic. Hence the product rule gives $\Delta R_s=2R_{s+1}$, so we inductively get $\Delta^s(PQ)=2^sR_s$. Since $m+1$ is larger than at least one of the degrees $p$ and $q$, every summand in $R_{m+1}$ vanishes. Thus
\[ \Delta^{m+1}(PQ)=0. \]
Combining this with \eqref{eq:SWdecompose} and \eqref{eq:SWdecompose2} proves \eqref{eq:PQ_high}. 
Consequently, we can truncate decomposition \eqref{eq:PQ_full} to
\begin{equation}\label{eq:PQ_truncated}
P(x)Q(x) = \sum_{\ell=0}^{m}|x|^{2\ell}K_{d-2\ell}(x).
\end{equation}

Since $Q\in\mathcal{H}_q$, formula \eqref{eq:Kelvin_of_Q} applies to it with $k=q$.
Both $P$ and $\mathcal{K}Q$ are harmonic on
$\mathbb{R}^n\setminus\{0\}$, so
\[ 2\nabla P\cdot\nabla(\mathcal{K}Q) = \Delta\bigl(P\,\mathcal{K}Q\bigr). \]
Using \eqref{eq:Kelvin_of_Q} and \eqref{eq:PQ_truncated}, we obtain
\[ 2\nabla P(x)\cdot\nabla(\mathcal{K}Q)(x)
= \Delta\Bigl(|x|^{2-n-2q}P(x)Q(x)\Bigr) 
= \sum_{\ell=0}^{m} \Delta\Bigl( |x|^{2-n-2q+2\ell}K_{d-2\ell}(x) \Bigr). \]
Applying \eqref{eq:radial_harmonic} with $\lambda=2-n-2q+2\ell$ and $k=d-2\ell$ gives
\[ \Delta\Bigl( |x|^{2-n-2q+2\ell}K_{d-2\ell}(x) \Bigr) 
= -2(p-\ell)(n-2+2q-2\ell) |x|^{-n-2q+2\ell}K_{d-2\ell}(x). \]
After division by $2$, we arrive at
\begin{equation}\label{eq:graddec}
\nabla P(x)\cdot\nabla(\mathcal{K}Q)(x)
= -\sum_{\ell=0}^{m} (p-\ell)(n-2+2q-2\ell) |x|^{-n-2q+2\ell}K_{d-2\ell}(x).
\end{equation}
Both cases, $n\geq3$ and $n=2$, now follow by observing that, additionally, the coefficient in \eqref{eq:graddec} is zero when either $\ell=p$ or $n=2$ and $\ell=q$.
\end{proof}


\section{Proofs of the main theorems}

\begin{proof}[Proof of Theorem~\ref{thm:fixed}]
By the standard orthogonal decomposition
\[ \textup{L}^2(\mathscr{A}_{a,b}) = \bigoplus_{k=0}^{\infty} \big(\textup{L}^2\big([a,b],r^{n-1}\,\textup{d}r\big)\otimes \mathscr{H}_k\big), \]
see \cite[Thm.~5.12]{ABR01}, every $f\in \textup{L}^2(\mathscr{A}_{a,b})$ has a unique expansion of the form
\[ f(r\sigma) = \sum_{k=0}^{\infty} \sum_{j=1}^{d(k)} g_{k,j}(r)Y_{k,j}(\sigma), \]
converging in $\textup{L}^2(\mathscr{A}_{a,b})$, where $g_{k,j}(r)$ is given by formula \eqref{eq:sph_exp_coeff} for a.e.\@ $r\in[a,b]$.
We prove the stated characterization in two steps.

\smallskip
\emph{Necessity.}
Assume that $f$ satisfies \eqref{eq:runcorn} for every admissible pair $u,v$.
Fix $k\geq0$ and $1\leq j\leq d(k)$.
If $\mathfrak{R}_k=\emptyset$, there is nothing to prove for this pair $k,j$.
Otherwise choose an integer $q$ so that $r^{k-n-2q}\in\mathfrak{R}_k$.
Thus, by the definition of $\mathfrak{R}_k$,
\begin{align*}
n=2 & \implies 1\leq q\leq k-1,\\
n\geq3 & \implies 0\leq q\leq k-1.
\end{align*}
For every $\zeta\in\mathbb{C}^n$ such that $\zeta\cdot\zeta=0$, Lemma~\ref{lem:specialpair} with $p=k-q$ gives
\[ \int_{\mathscr{A}_{a,b}} f(x)\,\nabla (\zeta\cdot x)^{k-q}\cdot\nabla\big(|x|^{2-n-2q}(\zeta\cdot x)^q\big)\,\textup{d}\textup{V}(x) = 0. \]
By formula \eqref{eq:specialpair} this becomes
\[ \int_{\mathscr{A}_{a,b}} f(x)\, |x|^{-n-2q} (\zeta\cdot x)^k \,\textup{d}\textup{V}(x) = 0; \]
the scalar factor $-p(n+2q-2)$ has been omitted because it is nonzero.
Passing to the $n$-dimensional spherical coordinates $x=r\sigma$, $(r,\sigma)\in[a,b]\times\mathbb{S}^{n-1}$, we get
\begin{equation}\label{eq:momentnull}
\int_a^b \left(\int_{\mathbb{S}^{n-1}} f(r\sigma)(\zeta\cdot\sigma)^k\,\textup{d}\textup{S}(\sigma)\right)
 r^{k-1-2q}\,\textup{d}r = 0.
\end{equation}
By Lemma~\ref{lem:span}, the restrictions of $(\zeta\cdot\sigma)^k$ to $\mathbb{S}^{n-1}$ indexed by all such $\zeta$ span the whole $\mathscr{H}_k$, so by linearity \eqref{eq:momentnull} remains valid with $(\zeta\cdot\sigma)^k$ replaced by an arbitrary spherical harmonic $Y\in\mathscr{H}_k$.
In particular, it holds for $Y=\overline{Y_{k,j}}$, 
\[ \int_a^b \left(\int_{\mathbb{S}^{n-1}} f(r\sigma)\overline{Y_{k,j}(\sigma)}\,\textup{d}\textup{S}(\sigma)\right)
 r^{k-1-2q}\,\textup{d}r = 0. \]
Since formula \eqref{eq:sph_exp_coeff} applies for a.e.\@ $r\in[a,b]$, dividing by $\|Y_{k,j}\|_{\textup{L}^2(\mathbb{S}^{n-1})}^2$ we obtain
\[ \langle g_{k,j}(r), r^{k-n-2q} \rangle_{\textup{L}^2([a,b],r^{n-1}\textup{d}r)} = \int_a^b g_{k,j}(r)\,r^{k-n-2q}\,r^{n-1}\textup{d}r = 0. \]
Since this holds for every monomial $r^{k-n-2q}$ in $\mathfrak{R}_k$, we obtain $g_{k,j}\in\mathfrak{G}_k$.

\smallskip
\emph{Sufficiency.}
Assume now that $g_{k,j}\in\mathfrak{G}_k$ for every $k,j$ and that $f$ is given by the $\textup{L}^2(\mathscr{A}_{a,b})$-convergent series \eqref{eq:sph_exp}.
Let $u$ be a complex-valued harmonic function on $\mathscr{U}$, and let $v$ be a complex-valued harmonic function on $\mathscr{V}$ that vanishes at infinity.
The standard expansions in spherical harmonics \cite[Ch.~10]{ABR01} yield
\begin{align*}
u(x) & = \sum_{p=0}^{\infty} P_p(x); \quad x\in\mathscr{U}, \\ v(x) & = \sum_{q=q_0}^{\infty} (\mathcal{K}Q_q)(x); \quad x\in\mathscr{V},
\end{align*}
where $P_p\in\mathcal{H}_p$, $Q_q\in\mathcal{H}_q$, and $q_0=1$ for $n=2$, while $q_0=0$ for $n\geq3$.
Both series, together with their first derivatives, converge uniformly on the closed annulus $\mathscr{A}_{a,b}$.
Because of this and bilinearity, the verification of \eqref{eq:runcorn} boils down to
\begin{equation}\label{eq:runcorn_terms}
\int_{\mathscr{A}_{a,b}} f(x)\nabla P_p(x)\cdot\nabla (\mathcal{K}Q_q)(x)\,\textup{d}\textup{V}(x)=0
\end{equation}
for spherical harmonics $P_p\in\mathcal{H}_p$, $Q_q\in\mathcal{H}_q$ and integers $p\geq1$ and $q\geq q_0$.
In order to verify \eqref{eq:runcorn_terms} we apply the decomposition from Lemma~\ref{lem:decompose} and expand its left-hand side as
\begin{equation}\label{eq:runcorn_terms2}
\sum_{\ell} \int_{\mathscr{A}_{a,b}} f(x) \,|x|^{-n-2q+2\ell} H_{p+q-2\ell}(x) \,\textup{d}\textup{V}(x)
\end{equation}
for some solid spherical harmonics $H_{p+q-2\ell}\in\mathcal{H}_{p+q-2\ell}$.
\begin{itemize}
\item If $n\geq3$, then generalized spherical coordinates $x=r\sigma$ turn each term of \eqref{eq:runcorn_terms2} into
\[ \int_a^b \int_{\mathbb{S}^{n-1}} f(r\sigma) H_{p+q-2\ell}(\sigma) \,\textup{d}\textup{S}(\sigma) \,r^{-n+p-q} r^{n-1}\,\textup{d}r; \quad 0\leq\ell\leq \min\{p-1,q\}. \]
After substituting $k:=p+q-2\ell$, expanding $H_k$ into the orthogonal basis $(\overline{Y_{k,j}})_{1\leq j\leq d(k)}$, and recalling \eqref{eq:sph_exp_coeff}, it remains to check
\[ \int_a^b g_{k,j}(r) \,r^{-k-n+2(p-\ell)}r^{n-1}\,\textup{d}r = 0 \]
for
\[ k\geq 1,\ 1\leq j\leq d(k),\ p\geq1,\ \max\{0,p-k\}\leq\ell\leq p-1. \]
Indeed, all these integrals vanish because 
\[ r^{-k-n+2(p-\ell)} \in \mathfrak{R}_k = \{r^{-k-n+2}, r^{-k-n+4},\ldots, r^{k-n}\} \]
and we assumed $g_{k,j}\in\mathfrak{G}_k$.
\item If $n=2$, then the terms of \eqref{eq:runcorn_terms2} can be rewritten as
\[ \int_a^b \int_{\mathbb{S}^{1}} f(r\sigma) H_{p+q-2\ell}(\sigma) \,\textup{d}\textup{S}(\sigma) \,r^{-2+p-q} r\,\textup{d}r; \quad 0\leq\ell\leq \min\{p-1,q-1\}. \]
Substituting $k:=p+q-2\ell$ again this reduces to
\[ \int_a^b g_{k,j}(r) \,r^{-k-2+2(p-\ell)}r\,\textup{d}r = 0 \]
for
\[ k\geq 2,\ 1\leq j\leq d(k),\ p\geq1,\ \max\{0,p-k+1\}\leq\ell\leq p-1. \]
The last integral equality holds because now we know
\[ r^{-k-2+2(p-\ell)} \in \mathfrak{R}_k = \{r^{-k}, r^{-k+2},\ldots, r^{k-4}\} \]
and we took $g_{k,j}$ from the orthogonal complement $\mathfrak{G}_k$ of $\mathfrak{R}_k$.
\end{itemize}
This completes the proof of \eqref{eq:runcorn_terms} and then also that of \eqref{eq:runcorn}.
\end{proof}

\begin{proof}[Proof of Theorem~\ref{thm:varying}]
For fixed complex-valued $u$ and $v$ as in the theorem statement, define
\[ F(r):=\int_{\{x\in\R^n:\ |x|=r\}} f(x)\nabla u(x)\cdot\nabla v(x)\,\textup{d}\textup{S}(x); \quad r\in(r_-,r_+). \]
Equivalently written,
\[ F(r)=r^{n-1}\int_{\mathbb{S}^{n-1}} f(r\sigma)\nabla u(r\sigma)\cdot\nabla v(r\sigma)\,\textup{d}\textup{S}(\sigma), \]
so $F$ is continuous in $r$.
Passing to the generalized spherical coordinates then gives
\[ \int_{\mathscr{A}_{a,b}} f(x)\nabla u(x)\cdot\nabla v(x)\,\textup{d}\textup{V}(x)=\int_a^b F(r)\,\textup{d}r \]
for all $a<b$ from the interval $(r_-,r_+)$.
Hence, (\ref{thmit1}) and (\ref{thmit2}) are equivalent.

\smallskip
Assume next that (\ref{thmit1}) holds and define the functions $g_{k,j}$ precisely by the formula \eqref{eq:sph_exp_coeff} for every $r\in(r_-,r_+)$. Each coefficient function is clearly continuous in $r$. For every compact subinterval $[a,b]\subset(r_-,r_+)$ the $\textup{L}^2$-expansion holds for the restriction of $f$ to $\mathscr{A}_{a,b}$. Theorem~\ref{thm:fixed} applies and gives $g_{k,j}|_{[a,b]}\in\mathfrak{G}_k$ for $k\geq0$ and $1\leq j\leq d(k)$.

\begin{itemize}
\item If $n\geq3$, $k\geq1$, and $1\leq j\leq d(k)$, then $r^{k-n}\in\mathfrak{R}_k$ (obtained for $\ell=k-1$), so
\[ \int_a^b g_{k,j}(r) \,r^{k-1}\,\textup{d}r = \bigl\langle g_{k,j}, r^{k-n} \bigr\rangle_{\textup{L}^2([a,b],r^{n-1}\textup{d}r)} = 0 \]
for every $[a,b]\subset(r_-,r_+)$.
Hence the continuous function $r\mapsto g_{k,j}(r) \,r^{k-1}$ has integral $0$ over every subinterval of $(r_-,r_+)$, so it vanishes identically. Therefore $g_{k,j}=0$. From the expansion \eqref{eq:sph_exp} we conclude that $f$ is radial.
\item If $n=2$, $k\geq2$, and $1\leq j\leq d(k)$, then $r^{k-4}\in\mathfrak{R}_k$ (obtained for $\ell=k-2$), so
\[ \int_a^b g_{k,j}(r) \,r^{k-3}\,\textup{d}r = \bigl\langle g_{k,j}, r^{k-4} \bigr\rangle_{\textup{L}^2([a,b],r\textup{d}r)} = 0 \]
for every $[a,b]\subset(r_-,r_+)$.
Again, the continuous function $r\mapsto g_{k,j}(r) \,r^{k-3}$ has integral $0$ over every subinterval, so $g_{k,j}=0$.
Hence, only spherical harmonics of degrees $0$ and $1$ may occur and expansion \eqref{eq:sph_exp} shows that $f$ is of the form asserted in (\ref{thmit3}).
\end{itemize}
We have proved that (\ref{thmit1}) implies (\ref{thmit3}).

\smallskip
Conversely, assume (\ref{thmit3}) and fix any $a<b$ in $(r_-,r_+)$.
If $n\geq3$, only the zeroth degree coefficient of $f$ may be nonzero, and $\mathfrak{R}_0=\emptyset$, so Theorem~\ref{thm:fixed} shows that \eqref{eq:runcorn} holds on $\mathscr{A}_{a,b}$.
If $n=2$, only the coefficients of degrees $0$ and $1$ may be nonzero, while $\mathfrak{R}_0=\mathfrak{R}_1=\emptyset$, so Theorem~\ref{thm:fixed} again yields \eqref{eq:runcorn} on $\mathscr{A}_{a,b}$.
Since $a$ and $b$ were arbitrary, (\ref{thmit3}) implies (\ref{thmit1}).
\end{proof}

It is also possible to give a direct proof that (\ref{thmit2}) and (\ref{thmit3}) are equivalent, without relying on Theorem~\ref{thm:fixed}. Essentially, one only uses the general expansion \eqref{eq:sph_exp} and the basic properties of spherical harmonics; we omit the details.


\section{Examples}
\label{sec:examples}

We now write down explicit consequences of Theorem~\ref{thm:fixed}. Throughout this section the annulus $\mathscr{A}_{a,b}$ is fixed, with $r_-<a<b<r_+$.

\subsection{A general construction}
For every $k\geq0$, let
\[ \Pi_k\colon \textup{L}^2\big([a,b],r^{n-1}\,\textup{d}r\big)
\to \operatorname{span}\mathfrak{R}_k \]
be the orthogonal projection with respect to the inner product $\langle\cdot,\cdot\rangle_{\textup{L}^2([a,b],r^{n-1}\textup{d}r)}$.
Then, for any smooth function $\phi\in \textup{C}^{\infty}([a,b])$ and any nonzero spherical harmonic
$Y\in\mathscr{H}_k$, the function
\[ f_{k,\phi,Y}(r\sigma):=\big(\phi(r)-\Pi_k\phi(r)\big)Y(\sigma) \]
is a smooth solution of \eqref{eq:runcorn} on the fixed annulus $\mathscr{A}_{a,b}$.
Indeed, the radial factor $\phi-\Pi_k\phi$ belongs to $\mathfrak{G}_k$ by construction, and all other spherical harmonic coefficients vanish.
Thus, Theorem~\ref{thm:fixed} applies directly.
If $k\geq1$ and $\phi\not\in\operatorname{span}\mathfrak{R}_k$, this gives a nonzero non-radial solution.

The first few low-degree moment conditions are as follows. In Table~\ref{tab:low-degree}, the function $g$ denotes the radial coefficient of a fixed spherical harmonic $Y_{k,j}$ of degree $k$; the index $j$ is suppressed.

\begin{table}[ht]
\[ \begin{array}{|c||c|c|}
\hline
& n=2 & n=3 \\
\hline\hline
k=0 \rule{0mm}{5mm} & \text{no condition} & \text{no condition} \\[1mm]
\hline
k=1 \rule{0mm}{8mm} & \text{no condition} & \displaystyle \int_a^b g(r)\,\textup{d}r=0 \\[4mm]
\hline
k=2 \rule{0mm}{8mm} & \displaystyle \int_a^b g(r) \,r^{-1}\,\textup{d}r=0 & \displaystyle \int_a^b g(r) \,r^{-1}\,\textup{d}r=0,\ \int_a^b g(r) \,r\,\textup{d}r=0 \\[4mm]
\hline
\end{array} \]
\caption{Low-dimensional moment conditions.}
\label{tab:low-degree}
\end{table}
 
This table is just the low-degree specialization of the general formulas in Theorem~\ref{thm:fixed}; higher degrees add one further moment condition for each additional admissible Laurent monomial.

\subsection{Two-dimensional examples}
In dimension $n=2$, a convenient basis of $\mathscr{H}_k$ on the unit circle is provided by the complex exponentials:
\begin{align*} 
& Y_{0,1}(e^{\mathbbm{i}\varphi}) := 1, \\
& Y_{k,1}(e^{\mathbbm{i}\varphi}) := e^{\mathbbm{i}k\varphi},\quad
Y_{k,2}(e^{\mathbbm{i}\varphi}) := e^{-\mathbbm{i}k\varphi}; \quad k\geq1. 
\end{align*}
Hence every $f\in \textup{L}^2(\mathscr{A}_{a,b})$ can be written as
\[ f(r,\varphi)=g_0(r) + \sum_{k=1}^{\infty} g_k^+(r)e^{\mathbbm{i}k\varphi} + \sum_{k=1}^{\infty} g_k^-(r)e^{-\mathbbm{i}k\varphi}. \]
Theorem~\ref{thm:fixed} says that \eqref{eq:runcorn} holds on a fixed annulus if and only if $g_0$, $g_1^+$, and $g_1^-$ are arbitrary, while 
\[ \int_a^b g_k^{\pm}(r) \,r^{2\ell-k+1}\,\textup{d}r = 0;
\quad \ell=0,1,\ldots,k-2 \]
for every $k\geq2$.
Thus the parts of degree $1$ are unconstrained, whereas every higher angular frequency is subject only to finitely many moment conditions.

For instance,
\[ f_1(r,\varphi):=r\cos\varphi \]
is not radial and satisfies \eqref{eq:runcorn} on every annulus.
On the other hand,
\[ f_2(r,\varphi):=\left(r-\frac{b-a}{\log(b/a)}\right)\cos(2\varphi) \]
satisfies \eqref{eq:runcorn} on the fixed annulus $\mathscr{A}_{a,b}$, because the only condition for degree $k=2$ stated in Table~\ref{tab:low-degree} is
\[ \int_a^b g(r) \,r^{-1}\,\textup{d}r=0, \]
and the chosen radial factor has exactly this property.

\subsection{Three-dimensional examples}
In dimension $n=3$, let 
\[ \{Y_k^m \,:\, -k\leq m\leq k\} \]
be the usual basis of spherical harmonics of degree $k$ on $\mathbb{S}^2$.
In spherical coordinates
\[ \begin{pmatrix} x\\ y\\ z\end{pmatrix} 
= \begin{pmatrix} r\sin\theta\cos\varphi\\ r\sin\theta\sin\varphi\\ r\cos\theta \end{pmatrix}; \quad r\in[0,\infty),\ \theta\in[0,\pi],\ \varphi\in[0,2\pi), \]
every $f\in \textup{L}^2(\mathscr{A}_{a,b})$ has an expansion
\[ f(r,\theta,\varphi)=\sum_{k=0}^{\infty}\sum_{m=-k}^{k} g_{k,m}(r)Y_k^m(\theta,\varphi), \]
and the conditions from Theorem~\ref{thm:fixed} become
\[ \int_a^b g_{k,m}(r) \,r^{-k+1+2\ell}\,\textup{d}r = 0;
\quad \ell=0,1,\ldots,k-1 \]
for every $k\geq1$ and every $m\in\{-k,\ldots,k\}$.
In particular, degree $1$ requires only one condition, namely
\[ \int_a^b g_{1,m}(r)\,\textup{d}r=0, \]
which is in line with Table~\ref{tab:low-degree} again.
Therefore,
\[ f(r,\theta,\varphi):=\Bigl(r-\frac{a+b}{2}\Bigr)\cos\theta \]
satisfies \eqref{eq:runcorn} on the fixed annulus $\mathscr{A}_{a,b}$.
This gives an explicit non-radial solution in three dimensions.
By contrast, Theorem~\ref{thm:varying} shows that no nonzero degree-$1$ term can survive if the identity is required on every sub-annulus.
Adding a radial term gives strictly positive non-radial examples on the same fixed annulus.
Namely, for $0<\varepsilon<2/(b-a)$ set
\[ f_{\varepsilon}(r,\theta,\varphi):=1+\varepsilon\Bigl(r-\frac{a+b}{2}\Bigr)\cos\theta \]
and note that $f_{\varepsilon}$ is strictly positive in addition to satisfying \eqref{eq:runcorn} on $\mathscr{A}_{a,b}$.


\section*{Author contributions}
Both authors contributed equally to this work.


\section*{Declaration of AI usage}
During the preparation of this manuscript, OpenAI's ChatGPT 5.4 Pro and ChatGPT 5.5 Pro were used for literature search, computation verification, text proofreading, exposition improvement, and production of TikZ code of Figure~\ref{fig:annulus-geometry}. The ideas, the proofs, the bibliography, and the writing of the manuscript are entirely the work of the authors.


\section*{Acknowledgments}
V.\,K. was supported by the Croatian Science Foundation project no.~IP-2022-10-5116, \emph{Fourier analysis and applications}, and by the European Union -- NextGenerationEU through the National Recovery and Resilience Plan 2021--2026, via an institutional grant of the University of Zagreb Faculty of Science IK IA 1.1.3. Impact4Math. 

I.\,S. was supported by the Croatian Science Foundation project no.~IP-2025-02-8625, \emph{Quantum aspects of gravity}, and by the European Union -- NextGenerationEU through the National Recovery and Resilience Plan 2021--2026, via an institutional grant of the University of Zagreb Faculty of Science (ProPuBFO).


\end{document}